\documentclass{amsart}

\usepackage[latin1]{inputenc}
\usepackage[all]{xy}
\usepackage{amssymb, amsmath, amscd, geometry}
\usepackage{graphicx}
\usepackage{setspace}

\numberwithin{equation}{section}
\input xy
\xyoption{all}


\usepackage{latexsym}
\usepackage[usenames]{color}
\usepackage{epic} 
\usepackage{mathrsfs}
\usepackage{euscript}
\usepackage{rotating}
\usepackage{verbatim}

\newtheorem{lemma}{Lemma}[section]
\newtheorem{theorem}[lemma]{Theorem}

\newtheorem{prop}[lemma]{Proposition}

\newtheorem*{theorem*}{Theorem}

\theoremstyle{definition}

      \newcommand{\N}{{\mathbb N}}
      
      \newcommand{\C}{{\mathbb C}}


\newcommand{\gM}{\mathsf{M}}

\newcommand{\Tk}{S_1^{k,\mathrm{sa}}}
\newcommand{\Herm}[1]{\gM_{#1}^{\mathrm{sa}}}

\newcommand{\tr}{\operatorname{tr}}

\newcommand{\id}{\operatorname{id}}

\newcommand{\M}{\mathcal M}
\newcommand{\RR}{\mathcal R}

\newcommand*{\coloneqq}{\mathrel{\vcenter{\baselineskip0.5ex \lineskiplimit0pt \hbox{\scriptsize.}\hbox{\scriptsize.}}} =}
\newcommand{\uno}{1\!\!1}


\newcommand{\cH}{\mathcal{H}}

\newcommand{\lo}{\mathrm{LO}}
\newcommand{\loccone}{\mathrm{LOCC}_\to}
\newcommand{\locc}{\mathrm{LOCC}}
\newcommand{\sep}{\mathrm{SEP}}
\newcommand{\all}{\mathrm{ALL}}

\title[Maximal gap between local and global distinguishability of bipartite quantum states]{Maximal gap between local and global distinguishability of bipartite quantum states}


\author{Willian H. G. Corr\^ea}
\address{Departamento de Matem\'atica, Instituto de Ci\^{e}ncias Matem\'aticas e de Computa\c{c}\~{a}o, Universida de S\~{a}o Paulo, Avenida Trabalhador S\~{a}o-carlense, 400-Centro CEP: 13566-590 - S\~{a}o Carlos- SP, Brazil}
\email{willhans@icmc.usp.br}

\author{Ludovico Lami}
\address{Institute of Theoretical Physics and IQST, Universit\"{a}t Ulm, Albert-Einstein-Allee 11D-89069 Ulm, Germany}
\email{ludovico.lami@gmail.com}

\author{Carlos Palazuelos}
\address{Dpto. An\'alisis Matem\'atico y Matem\'atica Aplicada, Fac. Ciencias Matemáticas, Universidad Complutense de Madrid, Plaza de Ciencias s/n 28040 Madrid, Spain \& Instituto de Ciencias Matem\'aticas, C/ Nicol\'as Cabrera, 13-15, 28049 Madrid, Spain}
\email{carlospalazuelos@mat.ucm.es}

\addtocounter{tocdepth}{-1}

\addtolength{\parskip}{+1ex}

\begin{document}

\keywords{}

\maketitle

\begin{abstract}
We prove a tight and close-to-optimal lower bound on the effectiveness of local quantum measurements (without classical communication) at discriminating any two bipartite quantum states. Our result implies, for example, that any two orthogonal quantum states of a $n_A\times n_B$ bipartite quantum system can be discriminated via local measurements with an error probability no larger than $\frac12 \left(1 - \frac{1}{c \min\{n_A, n_B\}} \right)$, where $1\leq c\leq 2\sqrt2$ is a universal constant, and our bound scales provably optimally with the local dimensions $n_A,n_B$. Mathematically, this is achieved by showing that the distinguishability norm $\|\cdot\|_{\lo}$ associated with local measurements satisfies that $\|\cdot\|_1\leq 2\sqrt2 \min\{n_A,n_B\} \|\cdot\|_{\lo}$, where $\|\cdot\|_1$ is the trace norm.
\end{abstract}

\section{Distinguishability of bipartite quantum states}\label{sec:Distinguishability}

The existence of pairs of bipartite quantum states that can be perfectly distinguished via general entangled measurements while being almost indistinguishable under local operations and classical communication (LOCC) is usually referred to as `quantum data hiding'. It is a distinctive property of bipartite quantum systems~\cite{TDL01, DLT02}. More generally, if a set of measurements $\mathcal M$ on a bipartite quantum system is fixed, one can study the effectiveness of discriminating protocols to distinguish two states $\rho$ and $\sigma$ with a priori probabilities $p$ and $1-p$ respectively, when only operations from the set $\mathcal M$ are available. A natural way to study this problem is by minimizing the corresponding probability of error. This can be done by introducing the distinguishability norm associated with $\mathcal M$~\cite{MWW09}, defined as
\begin{equation}\label{dist_norm}
\|x\|_{\mathcal M} \coloneqq \sup_{(e_i)_{i\in \mathcal I}\in \mathcal M}\sum_{i}|\tr(e_ix)|.
\end{equation}
By means of the above quantity, we can express the probability of error in discrimination by means of measurements in $\mathcal{M}$ as 
$$P_e^{\mathcal M}(\rho, \sigma, p)\coloneqq \frac{1}{2}(1-\|p\rho-(1-p)\sigma\|_{\mathcal M}) .$$
We thus see that the worst-case efficiency of the set $\M$ at binary state discrimination is effectively quantified by the data hiding ratio
$$\mathcal R(\mathcal M)= \sup_{\rho,\sigma,p} \frac{\|p\rho-(1-p)\sigma\|_{\all}}{\|p\rho-(1-p)\sigma\|_{\mathcal M}},$$
where $\|\cdot\|_{\all}$ denotes the norm associated with the whole set of possible measurements and the maximization runs over all pairs of states $\rho$, $\sigma$ and all $p\in (0,1)$.

Although the original formulation of the data hiding problem was stated for $\M=LOCC$, many other sets of measurements with equally legitimate operational interpretations have been considered in the recent literature. The set of local operations (LO), the set of local operations assisted by one-way classical communication (LOCC$_{\rightarrow}$) and the set of separable measurements (SEP) are some notable examples of families of measurements for which $\mathcal R(\mathcal M)$ has been deeply studied, with the motivation that it provides a precise quantification of the operational power of these families of measurements. In what follows, let us consider a bipartite quantum system $AB$ with finite-dimensional Hilbert space $\cH_{AB}=\cH_A\otimes \cH_B$. We will denote with $n_A\coloneqq \dim\cH_A$ (resp.\ $n_B\coloneqq \dim \cH_B$) the local dimension, and with $\mathbf{M}_{A}$ (resp.\ $\mathbf{M}_{B}$) the set of all measurements on system $A$ (resp.\ $B$). Let us define
\begin{align*}
\lo\, &\coloneqq\, \left\{ (e_{i}\otimes f_{j})_{(i,j)\in I\times J}:\ (e_{i})_{i\in I}\in\mathbf{M}_{A},\ (f_{j})_{j\in J}\in\mathbf{M}_{B} \right\},
\\[1ex]\loccone\, &\coloneqq\, \left\{ (e_{i}\otimes f_{j}^{(i)})_{(i,j)\in I\times J}:\ (e_{i})_{i\in I}\in\mathbf{M}_{A},\ (f_{j}^{(i)})_{j\in J}\in\mathbf{M}_{B}\ \forall\ i\in I \right\} , \\
\sep\, &\coloneqq\, \left\{ (e_{i}\otimes f_{i})_{i\in I} \in \mathbf{M}_{AB} \right\}.
\end{align*}

It is straightforward to verify the chain of inclusions $\lo\subseteq \loccone \subseteq \locc \subseteq \mathrm{SEP}$, which easily implies the following chain of inequalities $\mathcal R(\sep)\leq \mathcal R(\locc)\leq \mathcal R(\loccone)\leq \mathcal R(\lo)$. Moreover, sharp estimates have been obtained for some of these quantities since it is known that~\cite[Theorem~16]{LPW18}
\begin{align}\label{optimal est SEP-LOCC}
\min\{n_A, n_B\}\leq \mathcal R(\sep)\leq \mathcal R(\locc)\leq 2\min\{n_A, n_B\}-1
\end{align}
and moreover that
\begin{equation} \label{optimal est loccone}
\mathcal R(\loccone)\leq 2n_A-1.
\end{equation}
The asymmetry of the above inequality with respect to the exchange of the two subsystems $A,B$ is due to the fact that the set of measurements $\loccone$ is itself asymmetric, featuring a one-way communication from Alice to Bob but not vice versa. If Alice's system is the smaller of the two, then the right-hand sides of~\eqref{optimal est SEP-LOCC} and~\eqref{optimal est loccone} coincide.

Equation~\eqref{optimal est SEP-LOCC} shows, in particular, that $\mathcal R(\locc)$ scales linearly with the minimum local dimension. More interestingly, it implies the a priori nontrivial fact that extending the set of LOCC operations to the larger set SEP or reducing it to the smaller set $\loccone$ (with the communication being from the smaller to the larger system) does not modify the scaling of the data hiding ratio. In this context, the problem of determining the scaling of $\mathcal R(\lo)$ has so far remained open. The best known upper bounds prior to our work read
\begin{align}\label{Previous upper bound LO}
&\mathcal R(\lo)\leq \sqrt{153\, n_An_B},\qquad \mathcal R(\lo)\leq 4\min\left\{n_A^{3/2}, n_B^{3/2}\right\}.
\end{align}

The first estimate in Equation~\eqref{Previous upper bound LO} was proved in~\cite[Theorem 15]{MWW09} as an application of Berger's inequality for random variables. The second upper bound in Equation~\eqref{Previous upper bound LO} was shown instead in~\cite[Corollary 9]{ALPSW20} and it is based on a tight estimate for the quotient of the $\pi$ over the $\epsilon$ tensor norms on a certain tensor product of Banach spaces. Note that the two bounds in~\eqref{Previous upper bound LO} are incomparable: the tighter of the two is the former e.g.\ for sufficiently large $n_A=n_B$, and the latter for fixed $n_A$ and large $n_B$.

Since any reasonable set of measurements should contain the set of local operations, the quantity $\mathcal R(\lo)$ can be understood as the ultimate upper bound for data hiding, in the sense that it provides an upper bound for the quantity $\RR(\M)$ for any reasonable set of measurements $\M$. On the other hand, LO operations are the natural ones in those scenarios where Alice and Bob have local quantum memories whose coherence time is much shorter than the time light takes to travel between them, so that the exchange of classical messages is not an option for them.

Our main result is as follows.

\begin{theorem}\label{main hidden}
For bipartite quantum systems with local dimensions $n_A$ and $n_B$, the following inequality holds:
$$\RR(\lo)\leq 2\sqrt{2}\min\{n_A, n_B\}.$$
\end{theorem}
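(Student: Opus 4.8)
The plan is to prove the norm inequality behind the abstract: since the optimal (Helstrom) measurement gives $\|\cdot\|_{\all}=\|\cdot\|_1$ on Hermitian operators, the bound $\RR(\lo)\leq 2\sqrt2\min\{n_A,n_B\}$ is equivalent to the lower bound $\|x\|_{\lo}\geq \frac{1}{2\sqrt2\,\min\{n_A,n_B\}}\|x\|_1$ for every Hermitian $x$ on $\cH_A\otimes\cH_B$. I would fix the convention $n_A\leq n_B$ and aim to exhibit, for each such $x$, a single product POVM $(e_i\otimes f_j)_{i,j}$ witnessing this bound. A symmetric attempt — for instance the uniform (Haar) POVM on each factor — only reproduces the weaker scaling $\sqrt{n_An_B}$ of \cite{MWW09}, so the two systems must be treated asymmetrically, extracting the whole dimensional loss from the smaller factor.

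The conceptual backbone I would use is the tensor-norm picture behind \eqref{Previous upper bound LO}: up to identification, the trace norm is the projective tensor norm of the two self-adjoint trace classes $\Tn$ and $\Tm$ (with their base-norm structure), while $\|\cdot\|_{\lo}$ dominates, up to a universal constant, the corresponding injective tensor norm. In this language $\RR(\lo)$ is controlled by the norm of the formal identity $\Tn\otimes_\pi\Tm\to\Tn\otimes_\epsilon\Tm$. The strategy is to factor this identity through a Hilbert space, i.e.\ through the Hilbert--Schmidt norm, as a composition $\Tn\otimes_\pi\Tm\to(\text{Hilbert--Schmidt})\to\Tn\otimes_\epsilon\Tm$. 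The first arrow is controlled by the cotype-$2$ constant of the self-adjoint trace class, which is the dimension-free value $\sqrt2$ and is the origin of the $\sqrt2$ in the final bound; the second arrow is where the dimensional factor enters.

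The decisive step — and the one I expect to be the main obstacle — is to make this second arrow cost only $\min\{n_A,n_B\}$ rather than $\min\{n_A,n_B\}^{3/2}$, the latter being exactly what the cruder estimate of \cite{ALPSW20} yields. The point is that $x$ only \emph{sees} the larger system through its operator-Schmidt span, a real subspace of $\Herm{n_B}$ of dimension at most $n_A^2$: a measurement on $B$ need be informationally good only on a subspace whose dimension is governed by $\min\{n_A,n_B\}=n_A$, not by $n_B$. Quantitatively I would establish a restricted distinguishability estimate — on an $n_A^2$-dimensional subspace of Hermitian operators on $\cH_B$ there is a fixed POVM losing only a factor $\sim\sqrt{n_A^2}=n_A$ relative to $\|\cdot\|_1$, and symmetrically on $A$ — and then route the expensive $\|\cdot\|_1\to\|\cdot\|_2$ passage through the small factor only, handling the large factor by a cheap (essentially isometric) step. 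This asymmetric routing, rather than the symmetric one, is what turns $\min^{3/2}$ into $\min$.

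Finally I would reassemble the constants: the factor $2$ comes from the two-outcome (sign) structure of the trace-norm observable that the product measurement must reproduce, and the $\sqrt2$ from the cotype-$2$ constant above, giving $2\sqrt2\min\{n_A,n_B\}$. The delicate bookkeeping is to guarantee that none of the three ingredients — the cotype constant, the asymmetric dimensional count, and the conversion between the injective tensor norm and the genuine product-POVM norm $\|\cdot\|_{\lo}$ — introduces a spurious dimensional or numerical factor. Controlling that last conversion, which must respect the non-adaptive, independent-index constraint defining $\lo$ (as opposed to the correlated effects $e_i\otimes f_i$ allowed in $\sep$) without loss, is the subtle point that separates this result from the easier $\RR(\sep)$ and $\RR(\locc)$ bounds recorded in \eqref{optimal est SEP-LOCC}.
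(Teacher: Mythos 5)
Your plan has a genuine structural flaw: the ``conceptual backbone'' you adopt --- identifying $\|\cdot\|_{\all}=\|\cdot\|_1$ with the projective tensor norm of $S_1^{n_A,\mathrm{sa}}$ and $S_1^{n_B,\mathrm{sa}}$ and then bounding the norm of the formal identity $S_1^{n_A,\mathrm{sa}}\otimes_\pi S_1^{n_B,\mathrm{sa}}\to S_1^{n_A,\mathrm{sa}}\otimes_\epsilon S_1^{n_B,\mathrm{sa}}$ --- cannot yield the theorem. The trace norm of the composite system is \emph{not} the projective norm of the two trace classes; it is only dominated by it, and the gap between the two is exactly where the improvement lives. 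Indeed, the second bound in~\eqref{Previous upper bound LO} comes from a \emph{tight} estimate of the $\pi/\epsilon$ ratio on this pair of spaces, which genuinely scales as $\min\{n_A,n_B\}^{3/2}$; hence no factorization of the $\pi\to\epsilon$ identity through a Hilbert space, however asymmetrically you route the $\|\cdot\|_1\to\|\cdot\|_2$ loss, can reach linear scaling. Any correct argument must compare the $\epsilon$-norm against the trace norm \emph{directly}. Moreover, your decisive quantitative ingredient --- the ``restricted distinguishability estimate'' on the operator-Schmidt span --- is an unproved placeholder whose natural justification fails: that span is an $n_A^2$-dimensional subspace of $\Herm{n_B}$, but subspace dimension does not control operator rank, so the Hilbert--Schmidt passage $\|y\|_1\leq\sqrt{\dim}\,\|y\|_2$ is false on such subspaces and the count $\sqrt{n_A^2}=n_A$ does not follow. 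As written, the proposal replaces the main difficulty by an assertion.

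For comparison, the paper's route is: Lemma~\ref{lemma norms} gives $\|z\|_{S_1^{n_A,\mathrm{sa}}\otimes_\epsilon S_1^{n_B,\mathrm{sa}}}\leq\|z\|_{\lo}$ \emph{exactly} (constant $1$), reducing everything to Proposition~\ref{key prop}, i.e.\ $\|z\|_{S_1^{n_An_B}}\leq 2n_A\|z\|_{S_1^{n_A}\otimes_\epsilon S_1^{n_B}}$ for $n_A\leq n_B$. This is proved by choosing a unitary $U$ with $\|z\|_1=\tr(zU)$, expanding $z$ and $U$ in matrix units over the \emph{small} factor, and applying Haagerup's noncommutative Grothendieck theorem (Theorem~\ref{NCGT_Theorem}) to the bilinear form $V_z$ on $\gM_{n_A}\times\gM_{n_B}$; Cauchy--Schwarz together with $\sum_{i,j}E_{i,j}^*E_{i,j}=n_A\uno_A$ and $\sum_{i,j}U_{i,j}^*U_{i,j}=n_A\uno_B$ (the latter from unitarity of $U$) produces the factor $2n_A$. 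Note also that your accounting of the constants is off: the $2$ comes from the two states (row and column averages) in the Grothendieck inequality, and the $\sqrt2$ from the Regev--Vidick comparison of the complex and selfadjoint $\epsilon$-norms~\cite{ReVi15}, not from a cotype-$2$ constant or a two-outcome sign structure. Your instinct to treat the factors asymmetrically and to let the large factor enter only through quadratic averages is sound --- that is exactly what unitarity of $U$ accomplishes in the paper --- but the missing idea is the noncommutative Grothendieck factorization applied to $V_z$ at the trace-norm level, bypassing the projective norm altogether.
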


This result establishes the optimal scaling of $\RR(\lo)$ with respect to the local dimensions. It shows that the data hiding ratios associated with LO and LOCC are of the same order, something (arguably) unexpected. Indeed, although there exist pairs of states that are much more efficiently discriminated via LOCC than via LO~\cite[Theorem~3]{Aubrun-2015}, the largest possible gap against global measurements is the same for both sets. Moreover, our result implies that the gap remains the same if one further restricts the local measurements to be binary, and the a posteriori communication to a single bit (in fact, even an XOR discrimination scheme suffices). In fact, according to Eq.~\eqref{optimal est SEP-LOCC}, all different sets of measurements give rise to data hiding ratios scaling as $\Theta\left(\min\{n_A,n_B\}\right)$.  

As we will see, Theorem~\ref{main hidden} follows from an optimal estimate between two norms in a certain Banach space (see Proposition~\ref{key prop} below). Once the problem is reduced to this setting, the main ingredient of our proof is the so called non-commutative Grothendieck's theorem for general C$^*$-algebras.

As an application illustrating the fundamental status of our result, we give improved bounds on the phenomenon of quantum Darwinism~\cite{Zurek2009, Brandao2015, Knott2018, Eugenia-Darwin, Ranard2020}. In spite of the slightly confusing name, this refers to the emergence of objective features when a single quantum system is probed by many other systems that play the role of observers. Our contribution is to substantially better the scaling of the dimensional coefficients governing the rate at which the transition between microscopic (quantum) and macroscopic (classical) world takes place. Prior to our work, the best coefficient has been reported by Qi and Ranard~\cite[Eq.~(8) and~(31)]{Ranard2020}, and reads $\Omega_{d_A, d_R} = \min\left\{ d_A^2, 4d_A^{3/2}, 4d_R^{3/2}, \sqrt{153 d_A d_R}, 2d_R-1 \right\}$; we show below how this can be improved to $\Omega_{d_A,d_R}=\min\left\{ 2\sqrt2 d_A,\, 2d_R-1\right\}$. This is a substantial improvement because it reduces the scaling in $d_A$ from $O(d_A^{3/2})$ to $O(d_A)$. Note that $d_A$, which is an intrinsic feature of the observed system and independent of the observer, is the most important parameter here, to the point that some authors consider $d_R$ unconstrained, potentially unbounded and even infinite~\cite{Brandao2015} --- for example, $d_R$ will be astronomically large when $R$ contains a human observer. Thus, often the only practically relevant parameter is $d_A$, which can be relatively small if the observed system contains a few qubits or atoms. Our result thus implies that the classical regime is entered earlier than previously expected, i.e.\ for a smaller number of observing systems.

\section{Proof of the main result}
In order to prove Theorem~\ref{main hidden} we need to introduce some basic elements from Banach space theory. Let us denote by $\gM_k$ (resp.\ $\Herm{k}$) the complex (resp.\ real) vector space of $k\times k$ (resp.\ selfadjoint) matrices. We will consider here the trace norm $\|\cdot\|_1$  and the operator norm $\|\cdot\|_\infty$ and denote the corresponding spaces  $S_1^{k}$ (resp.\ $\Tk$) and  $S_\infty^{k}$ (resp.\ $S_\infty^{k,\mathrm{sa}}$). It is well known that these spaces are dual to each other: $(S_1^{k})^*=S_\infty^{k}$, $(S_\infty^{k})^*=S_1^{k}$ isometrically (and similarly for the selfadjoint versions), where the duality is given by the Hilbert--Schmidt inner product, $\langle x,  y \rangle= \tr(xy)$. Remember that $\gM_n\otimes \gM_m=\gM_{nm}$ and $\Herm{n}\otimes \Herm{m}=\Herm{nm}$ canonically. 

We will also use some standard identifications between bilinear forms and tensor products. In particular, given two (real or complex) normed spaces $X$ and $Y$ and an element $z\in X\otimes Y$, we define the $\epsilon$ tensor norm by
\begin{equation}\label{inj_norm}
\|z\|_{X\otimes_{\epsilon} Y}=\sup\{(x^*\otimes y^*)(z)\}, 
\end{equation}where the supremum runs over all elements $x^*$  and $y^*$ in the dual space $X^*$ and $Y^*$ respectively, such that $\|x\|_{X^*}\leq 1$, $\|y\|_{Y^*}\leq 1$. We will denote by $X\otimes_{\epsilon} Y$ the algebraic tensor product $X\otimes Y$ endowed with the $\epsilon$ tensor norm.

Now, if $X$ and $Y$ are finite dimensional spaces the space of bilinear forms on $X\times Y$, can be canonically identified with the tensor product $X^*\otimes Y^*$ (the trivial direction assigns, to any elementary tensor $x^*\otimes y^*\in X^*\times Y^*$, the bilinear form $B:X\times Y\rightarrow \C$ defined as $B(x,y)=x^*(x)y^*(y)$). Moreover, if we consider the standard norm for bilinear forms and denote the corresponding normed space by  $\mathcal B(X\times Y)$, the following identification is isometric:
\begin{equation}\label{duality_Bil_tensor}
\mathcal B(X\times Y)=X^*\otimes_\epsilon Y^*.
\end{equation}

The proof of Theorem~\ref{main hidden} is based on reducing its statement to the comparison of two well known norms in Banach spaces. To this end, let us look at the distinguishability norm a little bit more formally than before. First note that, given  bipartite quantum states $\rho$ and $\sigma$ of local dimension $n_A$ and $n_B$ and $p\in [0,1]$, the element $z=p\rho-(1-p)\sigma$ naturally lives in $\Herm{n_An_B}$. Then, one can show that if a set of measurements $\M$ is informationally complete\footnote{$\M$ is informationally complete if $\mathrm{span}\left\{e_i: i\in I;\, \Lambda=(e_i)_{i\in I}\in \M\right\}=\Herm{n_An_B}$. Since this property is satisfied by the sets considered in this work, $\M=\all$ and $\M=\lo$, we will always assume $\M$ to be informationally complete.} Equation~\eqref{dist_norm} defines a norm on $\Herm{n_An_B}$. Now, given a set of measurements $\M$, let us denote by $\langle \M\rangle$ the set generated by $\M$ via \emph{coarse graining}; mathematically, 
\begin{equation}
\langle \mathcal{M} \rangle\, \coloneqq\, \bigg\{ (e_{j})_{j\in J}:\ \exists\ I\ \text{finite},\ \{I_{j}\}_{j\in J}\ \text{partition of}\ I:\ (e_{i})_{i\in I}\in\mathcal{M},\ e_{j}=\sum_{i\in I_{j}} e_{i}\ \, \forall\, j\in J \bigg\} .
\label{coarse}
\end{equation} 

The following reformulation of~\cite[Lemma 2]{MWW09} gives us a very useful description of $\|\cdot\|_{\M}$.
\begin{lemma}\label{description dist_norm}
Let $\M$ be a set of measurements and $z\in \Herm{n_An_B}$. Then,
\begin{equation}\label{Eq. description dist_norm}
\|z\|_{\M}=\sup\left\{\tr(zX): X\in \Herm{n_An_B},\,  \left(\frac{\uno_{AB}+ X}{2}, \frac{\uno_{AB}-X}{2}\right)\in \langle \mathcal{M} \rangle\right\}.
\end{equation}
\end{lemma}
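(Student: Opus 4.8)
The plan is to prove the identity by establishing the two inequalities separately, the bridge in both cases being the observation that an optimal discrimination strategy may always be taken to be \emph{binary}, obtained from a finer measurement by coarse graining according to the sign of $\tr(e_i z)$.

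For the inequality $\|z\|_{\M}\le\sup\{\ldots\}$, I would start from an arbitrary measurement $(e_i)_{i\in I}\in\M$ and split the index set into $I_+=\{i:\tr(e_iz)\ge 0\}$ and its complement $I_-$. Setting $e_\pm=\sum_{i\in I_\pm}e_i$ produces a binary measurement $(e_+,e_-)$, which lies in $\langle\M\rangle$ because $\{I_+,I_-\}$ is a partition of $I$. Since the $e_i$ are positive and sum to $\uno_{AB}$, we have $e_++e_-=\uno_{AB}$, so writing $X\coloneqq e_+-e_-$ gives $e_\pm=(\uno_{AB}\pm X)/2$ with $X$ selfadjoint and $-\uno_{AB}\le X\le\uno_{AB}$. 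By the very choice of the partition, $\tr(Xz)=\tr(e_+z)-\tr(e_-z)=\sum_{i\in I_+}|\tr(e_iz)|+\sum_{i\in I_-}|\tr(e_iz)|=\sum_i|\tr(e_iz)|$, so this particular $X$ is admissible in the supremum on the right-hand side of~\eqref{Eq. description dist_norm} and attains the value $\sum_i|\tr(e_iz)|$. Taking the supremum over $(e_i)_{i\in I}\in\M$ then yields the claim.

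For the reverse inequality $\sup\{\ldots\}\le\|z\|_{\M}$, I would take any admissible $X$, i.e.\ any selfadjoint $X$ with $\bigl(\tfrac{\uno_{AB}+X}{2},\tfrac{\uno_{AB}-X}{2}\bigr)\in\langle\M\rangle$. By the definition~\eqref{coarse} of $\langle\M\rangle$, this binary pair is a coarse graining of some $(e_i)_{i\in I}\in\M$ along a partition $I=I_+\sqcup I_-$, with $\tfrac{\uno_{AB}+X}{2}=\sum_{i\in I_+}e_i$ and $\tfrac{\uno_{AB}-X}{2}=\sum_{i\in I_-}e_i$. The triangle inequality then gives $\tr(Xz)=\sum_{i\in I_+}\tr(e_iz)-\sum_{i\in I_-}\tr(e_iz)\le\sum_{i\in I_+}|\tr(e_iz)|+\sum_{i\in I_-}|\tr(e_iz)|=\sum_i|\tr(e_iz)|\le\|z\|_{\M}$. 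Taking the supremum over all admissible $X$ completes the argument.

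The proof is essentially bookkeeping; the only points requiring care are the interplay with the coarse-graining closure $\langle\M\rangle$ --- namely that the sign-partitioned binary measurement stays inside $\langle\M\rangle$ (used in the first inequality) and that every admissible binary $X$ genuinely arises as a coarse graining of a member of $\M$ (used in the second) --- together with matching the positivity and normalization constraints on the $e_i$ with the condition $-\uno_{AB}\le X\le\uno_{AB}$ encoded by the binary-measurement requirement. Since both directions reduce to the elementary fact that coarse graining never increases $\sum_i|\tr(e_iz)|$ while the sign partition makes it stationary, I do not anticipate any serious obstacle.
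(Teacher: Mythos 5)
Your proof is correct; the paper itself gives no argument for this lemma, deferring entirely to the citation \cite[Lemma 2]{MWW09}, and your sign-partition argument (coarse-grain $(e_i)_{i\in I}$ along $I_\pm=\{i:\pm\tr(e_iz)\geq 0\}$ for one direction, unfold the coarse graining and apply the triangle inequality for the other) is exactly the standard proof underlying that reference. The only cosmetic point worth a word is the degenerate case $I_-=\emptyset$ (when every $\tr(e_iz)\geq 0$), where $e_-=0$: this is covered by reading the definition~\eqref{coarse} as permitting empty cells in the partition, equivalently by adjoining the zero effect, so it poses no real obstacle.
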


The following lemma will allow us to reduce our problem on $\RR(\lo)$ to a problem about tensor norms.
\begin{lemma}\label{lemma norms}
Given two bipartite quantum states $\rho$ and $\sigma$ of local dimension $n_A$ and $n_B$, $p\in [0,1]$ and $z=p\rho-(1-p)\sigma\in \Herm{n_An_B}$, we have
\begin{enumerate}
    \item $\|z\|_{\all}=\|z\|_{S_1^{n_An_B,\mathrm{sa}}}$.
    \item $\|z\|_{S_1^{n_A,\mathrm{sa}}\otimes_\epsilon S_1^{n_B,\mathrm{sa}}}\leq \|z\|_{\lo}$.
\end{enumerate}
\end{lemma}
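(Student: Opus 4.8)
The plan is to use the description of the distinguishability norm furnished by Lemma~\ref{description dist_norm} to rewrite both $\|z\|_{\all}$ and $\|z\|_{\lo}$ as suprema of $\tr(zX)$ over admissible selfadjoint observables $X$ (those for which the pair $(\frac{\uno_{AB}+X}{2},\frac{\uno_{AB}-X}{2})$ is a binary coarse-graining of a measurement in the relevant family), and then to match these expressions against the familiar dual descriptions of the trace norm and of the injective tensor norm. The whole lemma is essentially a dictionary between measurement-theoretic and Banach-space-theoretic language, with one genuinely non-formal construction in part~(2).

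For part~(1) I would apply Lemma~\ref{description dist_norm} with $\M=\all$. Since every binary POVM $(E_0,E_1)$ with $E_0,E_1\geq 0$ and $E_0+E_1=\uno_{AB}$ belongs to $\langle\all\rangle$, writing $E_0=\frac{\uno_{AB}+X}{2}$ and $E_1=\frac{\uno_{AB}-X}{2}$ turns the positivity constraints into $-\uno_{AB}\leq X\leq \uno_{AB}$, i.e.\ $\|X\|_\infty\leq 1$ for selfadjoint $X$. Hence $\|z\|_{\all}=\sup\{\tr(zX):X=X^*,\ \|X\|_\infty\leq 1\}$, which by the isometric duality $(S_\infty^{k})^*=S_1^{k}$ equals $\|z\|_{S_1^{n_An_B,\mathrm{sa}}}$; because $z$ is selfadjoint one may restrict the supremum to selfadjoint $X$ and drop the modulus by adjusting a sign, so nothing is lost.

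For part~(2) I would first unfold the injective norm. Under $\Herm{n_An_B}=\Herm{n_A}\otimes\Herm{n_B}$ together with the isometries $(S_1^{n_A,\mathrm{sa}})^*=S_\infty^{n_A,\mathrm{sa}}$ and $(S_1^{n_B,\mathrm{sa}})^*=S_\infty^{n_B,\mathrm{sa}}$, a pair of unit-ball functionals $x^*,y^*$ corresponds to selfadjoint matrices $A$ on $\cH_A$ and $B$ on $\cH_B$ with $\|A\|_\infty\leq 1$ and $\|B\|_\infty\leq 1$, acting by $(x^*\otimes y^*)(z)=\tr\big((A\otimes B)z\big)$. Thus
\[
\|z\|_{S_1^{n_A,\mathrm{sa}}\otimes_\epsilon S_1^{n_B,\mathrm{sa}}}=\sup\Big\{\tr\big((A\otimes B)z\big):\ -\uno_A\leq A\leq \uno_A,\ -\uno_B\leq B\leq \uno_B\Big\}.
\]
The crux is then to show that each product observable $X=A\otimes B$ with these constraints is realizable by a binary LO measurement, so that Lemma~\ref{description dist_norm} gives $\|z\|_{\lo}\geq\tr(zX)$. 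This is exactly the XOR scheme: given such $A,B$, form the local binary POVMs $(\frac{\uno_A+A}{2},\frac{\uno_A-A}{2})$ and $(\frac{\uno_B+B}{2},\frac{\uno_B-B}{2})$ (the norm bounds guarantee positivity), take their four-outcome product, which lies in $\lo$, and coarse-grain by parity, declaring outcome $0$ when the two local outcomes agree and $1$ when they disagree. A short computation via \eqref{coarse} shows the two resulting effects are precisely $\frac{\uno_{AB}\pm A\otimes B}{2}$, so that $X=A\otimes B$ is admissible for $\langle\lo\rangle$. Taking the supremum over all admissible $A,B$ yields the claimed inequality.

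The genuinely non-routine step is this XOR construction: recognizing that the injective-norm witnesses $A\otimes B$ are exactly the observables produced by post-processing two independent local binary measurements through their parity, and checking the identity $F_0-F_1=A\otimes B$ for the coarse-grained effects. Everything else---the duality computation in part~(1) and the functional-analytic identifications opening part~(2)---is bookkeeping. I would expect the only subtlety worth stating explicitly to be the verification that the four-outcome product measurement and its parity coarse-graining do land in $\langle\lo\rangle$ as defined in \eqref{coarse}, which is immediate once the partition of $\{0,1\}\times\{0,1\}$ into the agreeing and disagreeing pairs is written down.
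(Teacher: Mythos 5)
Your proposal is correct and takes essentially the same route as the paper: part~(1) is the same application of Lemma~\ref{description dist_norm} with $\M=\all$ followed by trace-norm duality, and your parity/XOR coarse-graining in part~(2) is precisely the paper's argument --- the identity $F_0-F_1=A\otimes B$ you describe is exactly the pair of displayed identities in the paper's proof, which exhibit $\bigl(\frac{\uno_{AB}+A\otimes B}{2},\frac{\uno_{AB}-A\otimes B}{2}\bigr)$ as the agree/disagree coarse-graining of the four-outcome product measurement in $\langle\lo\rangle$.
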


\begin{proof}
In order to prove item (1) we just need to apply Lemma~\ref{description dist_norm} to the set $\M=\all$. Then, Lemma~\ref{description dist_norm} guarantees that 
$$\|z\|_{\all}=\sup\left\{\tr(zX): X\in \Herm{n_An_B},\,  \left(\frac{\uno_{AB}+X}{2}, \frac{\uno_{AB}-X}{2}\right)\in \all \right\}.$$ Now, condition $\left(\frac{\uno_{AB}+X}{2}, \frac{\uno_{AB}-X}{2}\right)\in \all$ is equivalent to the fact that $\uno_{AB}\pm X$ are both semidefinite positive matrices. This is clearly equivalent to $X$ being in the unit ball of $S_\infty^{n_An_B,\mathrm{sa}}$. Hence, the equality in item (1) follows from the duality relation $S_\infty^{n_An_B,\mathrm{sa}}=(S_1^{n_An_B,\mathrm{sa}})^*$.

Item 2 was actually proved in~\cite[Proposition 22]{LPW18} in much more generality. We present a self-contained proof here for completeness. According to the definition of the injective tensor norm~\eqref{inj_norm}, given elements $f$ and $g$ in the unit ball of $S_\infty^{n_A,\mathrm{sa}}$ and $S_\infty^{n_B,\mathrm{sa}}$ respectively, we must show that $\tr(z\, f\otimes g)\leq \|z\|_{\lo}$. To this end, we will show that $A=f\otimes g$ is one of the elements appearing in Equation~\eqref{Eq. description dist_norm} when $\M=\lo$. Indeed, this follows from the trivial identities:
\begin{align*}
\frac12 \left( \uno_A\otimes \uno_B + f\otimes g \right)\, &=\, \frac{\uno_A+f}{2}\otimes \frac{\uno_B+g}{2}\, +\, \frac{\uno_A-f}{2}\otimes \frac{\uno_B-g}{2}\, ,  \\
\frac12 \left( \uno_A\otimes \uno_B - f\otimes g \right)\, &=\, \frac{\uno_A+f}{2}\otimes \frac{\uno_B-g}{2}\, +\, \frac{\uno_A-f}{2}\otimes \frac{\uno_B+g}{2}\, , 
\end{align*}
joint with the fact that $\left\{ \frac12 (\uno_A+ f),\, \frac12(\uno_A-f) \right\}$ is a valid measurement on $A$, and analogously $\left\{ \frac12 (\uno_B+ g),\, \frac12(\uno_B-g) \right\}$ is a measurement on $B$. This easily implies that $$\left\{ \frac12 (\uno_A\otimes \uno_B + f\otimes g),\, \frac12 (\uno_A\otimes\uno_B - f\otimes g)\right\} \in \langle\lo\rangle.$$
\end{proof}

According to Lemma~\ref{lemma norms}, Theorem~\ref{main hidden} follows from the next proposition, which is the key result of the paper.
\begin{prop}\label{key prop}
For any $z\in \gM_{n_An_B}$ we have that $$\|z\|_{S_1^{n_An_B}}\leq 2\min\{n_A,n_B\}\|z\|_{S_1^{n_A}\otimes_\epsilon S_1^{n_B}}.$$
Moreover, if $z\in \Herm{n_An_B}$, then
\begin{equation}
\|z\|_{S_1^{n_An_B}}\leq 2\sqrt{2}\min\{n_A,n_B\}\|z\|_{S_1^{n_A,\mathrm{sa}}\otimes_\epsilon S_1^{n_B,\mathrm{sa}}}.
\label{key_bound}
\end{equation}
\end{prop}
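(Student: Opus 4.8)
The plan is to dualize and then feed the non-commutative Grothendieck inequality into a dimension count performed in the \emph{smaller} of the two matrix algebras. First I would use trace/operator-norm duality to write $\|z\|_{S_1^{n_An_B}}=\sup\{|\tr(zW)| : W\in\gM_{n_An_B},\ \|W\|_{S_\infty^{n_An_B}}\le 1\}$, so that it suffices to bound $|\tr(zW)|$ for an arbitrary contraction $W$. On the other side, by the isometric identification \eqref{duality_Bil_tensor} the quantity $\|z\|_{S_1^{n_A}\otimes_\epsilon S_1^{n_B}}$ is exactly the norm of the bilinear form $u_z(a,b)=\tr(z\,(a\otimes b))$ on $S_\infty^{n_A}\times S_\infty^{n_B}$, i.e.\ on the product of the two C$^*$-algebras $\gM_{n_A}$ and $\gM_{n_B}$.

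The main tool is then the non-commutative Grothendieck theorem in Haagerup's form: applied to $u_z$ it produces density matrices $\rho_1,\rho_2$ on $A$ and $\tau_1,\tau_2$ on $B$ such that
\[
|\tr(z\,(a\otimes b))|\ \le\ \|z\|_{S_1^{n_A}\otimes_\epsilon S_1^{n_B}}\,\left(\tr(\rho_1aa^*)+\tr(\rho_2a^*a)\right)^{1/2}\left(\tr(\tau_1bb^*)+\tr(\tau_2b^*b)\right)^{1/2}
\]
for all $a\in\gM_{n_A}$, $b\in\gM_{n_B}$. Assuming $n_A\le n_B$ (the bound is symmetric in $A,B$), I would fix a Hilbert--Schmidt orthonormal basis $(E_\alpha)_{\alpha=1}^{n_A^2}$ of $\gM_{n_A}$ and expand the contraction as $W=\sum_\alpha E_\alpha\otimes W_\alpha$ with $W_\alpha=\tr_A\left((E_\alpha^*\otimes\uno_B)W\right)$. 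Applying the Grothendieck bound term by term to $\tr(zW)=\sum_\alpha\tr(z\,(E_\alpha\otimes W_\alpha))$ and then Cauchy--Schwarz in $\alpha$, I would estimate $|\tr(zW)|$ by $\|z\|_{S_1^{n_A}\otimes_\epsilon S_1^{n_B}}$ times the product of the two factors $\left(\sum_\alpha[\tr(\rho_1E_\alpha E_\alpha^*)+\tr(\rho_2E_\alpha^*E_\alpha)]\right)^{1/2}$ and $\left(\sum_\alpha[\tr(\tau_1W_\alpha W_\alpha^*)+\tr(\tau_2W_\alpha^*W_\alpha)]\right)^{1/2}$.

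Both sums are then evaluated by two elementary identities. The completeness relation $\sum_\alpha E_\alpha E_\alpha^*=\sum_\alpha E_\alpha^*E_\alpha=n_A\uno_A$ for a Hilbert--Schmidt basis gives the first factor the \emph{exact} value $2n_A$, since $\rho_1,\rho_2$ are states; the transpose identity $\sum_\alpha E_\alpha\otimes E_\alpha^*=\mathrm{SWAP}$ gives $\sum_\alpha W_\alpha W_\alpha^*=\tr_A(WW^*)$ and $\sum_\alpha W_\alpha^*W_\alpha=\tr_A(W^*W)$, whence, using $\|W\|_\infty\le 1$ (so that $WW^*,W^*W\le\uno_{AB}$ and therefore $\tr_A(WW^*),\tr_A(W^*W)\le n_A\uno_B$), the second factor is at most $2n_A$. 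Multiplying the two yields $|\tr(zW)|\le 2n_A\,\|z\|_{S_1^{n_A}\otimes_\epsilon S_1^{n_B}}$, which is the first inequality. For the self-adjoint refinement I would observe that the supremum defining $\|z\|_{S_1^{n_An_B}}$ is attained at a self-adjoint contraction $W=\mathrm{sign}(z)$, expand in a self-adjoint orthonormal basis of $\Herm{n_A}$ so that the identical computation re-proves $\|z\|_{S_1^{n_An_B}}\le 2n_A\,\|z\|_{S_1^{n_A}\otimes_\epsilon S_1^{n_B}}$, and then convert the complex injective norm into the self-adjoint one via the complexification estimate $\|z\|_{S_1^{n_A}\otimes_\epsilon S_1^{n_B}}\le\sqrt2\,\|z\|_{S_1^{n_A,\mathrm{sa}}\otimes_\epsilon S_1^{n_B,\mathrm{sa}}}$ valid for $z=z^*$, which supplies the extra factor $\sqrt2$ in \eqref{key_bound}.

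The step I expect to be the crux is the dimension count: the factor $\min\{n_A,n_B\}$ (rather than $n_An_B$ or $n_A^{3/2}$) appears precisely because the expansion is carried out in the \emph{smaller} algebra, so that the partial trace in $\sum_\alpha W_\alpha W_\alpha^*=\tr_A(WW^*)$ is taken over the smaller system and contributes $\tr\uno_A=n_A$; expanding in the larger algebra would only give $2n_B$. The secondary delicate point is the sharp constant $\sqrt2$ in the complexification inequality for self-adjoint $z$, which must be established from the genuine contractivity constraints $\|ff^*\|_\infty,\|f^*f\|_\infty\le 1$ rather than from the weaker average bound $\tfrac12(ff^*+f^*f)\le\uno$.
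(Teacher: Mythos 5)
Your proposal is correct and follows essentially the same route as the paper's proof: dualize to a contraction $W$, apply Haagerup's noncommutative Grothendieck theorem to the bilinear form associated with $z$, expand over the smaller algebra, use Cauchy--Schwarz together with the completeness relations giving the two factors of $2n_A$, and finish with the $\sqrt{2}$ complexification estimate of Regev--Vidick. The only differences are cosmetic: the paper expands in matrix units $E_{i,j}$ and takes a unitary maximizer $U$ (so $\tr_A(UU^*)=n_A\uno_B$ holds with equality), whereas you use a general Hilbert--Schmidt orthonormal basis and a contraction, which changes nothing in the argument.
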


We did not attempt to optimize the constant $2\sqrt{2}$ above. In particular, at the time of writing we do not know if one can obtain a constant of $1$ on the right-hand side of~\eqref{key_bound}. We suspect that to do so one may need some substantially new ideas. However, it is clear from Equation~\eqref{optimal est SEP-LOCC} that the scaling of $\min\{n_A,n_B\}$ in the previous proposition is optimal (this can also be easily proved by using random matrices).

The proof of Proposition~\ref{key prop} is based on the noncommutative Grothendieck's theorem~\cite{Pisier78, Haagerup85}. We use here the last version proved in~\cite{Haagerup85}, which provides us with the optimal constant.
\begin{theorem}\label{NCGT_Theorem}
Let $V:A\times B\rightarrow \C$ be a bounded bilinear form on a pair of C$^*$-algebras $A$ and $B$. Then, there exist two states $\varphi_1$, $\varphi_2$ on $A$ and two states $\psi_1$, $\psi_2$ on $B$ such that $$|V(x,y)|\leq \|V\|\left(\varphi_1(x^*x)+\varphi_2(xx^*)\right)^{\frac{1}{2}}\left(\psi_1(y^*y)+\psi_2(yy^*)\right)^{\frac{1}{2}}$$for all $x\in A$ and $y\in B$.
\end{theorem}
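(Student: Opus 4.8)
The plan is to normalize $\|V\|=1$ and to split the proof into three stages: a soft duality step that replaces the factorization by a single \emph{test inequality}, a reduction to von Neumann algebras, and finally the test inequality itself, which is the genuine content of the theorem. First I would record the dual reformulation. By homogeneity and the arithmetic–geometric mean inequality, the asserted bound is equivalent to the existence of states $\varphi_1,\varphi_2$ on $A$ and $\psi_1,\psi_2$ on $B$ with
\[2|V(x,y)|\le\varphi_1(x^*x)+\varphi_2(xx^*)+\psi_1(y^*y)+\psi_2(yy^*)\qquad(x\in A,\ y\in B),\]
the point being that $(\varphi_1(x^*x)+\varphi_2(xx^*))(\psi_1(y^*y)+\psi_2(yy^*))$ can be optimized by the scaling $x\mapsto sx$, $y\mapsto s^{-1}y$ under which $V$ is invariant. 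The four state spaces are weak-$*$ compact and convex, the right-hand side is affine in $(\varphi_1,\varphi_2,\psi_1,\psi_2)$, and $\sup_{\varphi}\varphi(a)=\|a\|$ for $a\ge 0$. A routine Hahn–Banach separation (equivalently, the Ky Fan minimax theorem applied to the simplex of convex weights against the compact convex set of four-tuples of states) then shows that such states exist if and only if, for all finite families $x_1,\dots,x_n\in A$ and $y_1,\dots,y_n\in B$,
\[\Big|\sum_{k}V(x_k,y_k)\Big|\le\Big(\big\|\sum_k x_k^*x_k\big\|+\big\|\sum_k x_kx_k^*\big\|\Big)^{1/2}\Big(\big\|\sum_k y_k^*y_k\big\|+\big\|\sum_k y_ky_k^*\big\|\Big)^{1/2}.\]
Absorbing the weights by $x_k\mapsto\sqrt{\lambda_k}\,x_k$ and adjusting phases removes the absolute values; this half of the argument is entirely formal.

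To attack the test inequality I would first reduce to von Neumann algebras. Extending $V$ to the biduals $A^{**},B^{**}$, which are von Neumann algebras, yields a separately weak-$*$ continuous form of the same norm (Goldstine density of the unit balls), and any states produced on the biduals restrict to states on $A,B$ satisfying the same inequality, since $x^*x,xx^*\in A$ whenever $x\in A$. Hence one may assume $A\subseteq B(H)$ and $B\subseteq B(K)$ are von Neumann algebras and exploit polar decompositions, spectral calculus and normal states.

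The core — and the main obstacle — is the test inequality itself: this is precisely Grothendieck's conjecture, established by Pisier under an approximability hypothesis and in full generality by Haagerup, and there is no soft route to it. The strategy I would follow is Haagerup's analytic one. Using polar decompositions $x_k=u_k|x_k|$ in the von Neumann algebra, one builds an analytic family of elements that interpolates, as a complex parameter $\zeta$ crosses the strip $0\le\re\zeta\le 1$, between the ``column'' data $\sum_k x_k^*x_k$ at one edge and the ``row'' data $\sum_k x_kx_k^*$ at the other. Feeding this family into $V$ produces a bounded analytic function on the strip, to which the three-lines (maximum-modulus) principle applies; combining the two boundary estimates by Cauchy–Schwarz yields exactly the symmetric combination $\varphi_1(x^*x)+\varphi_2(xx^*)$, with the two states emerging from the two edges of the strip. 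The simultaneous appearance of $x^*x$ and $xx^*$, which has no counterpart in the commutative Grothendieck inequality, is forced by noncommutativity, and controlling both at once is the crux of the whole matter.

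In the special case relevant to this paper, where $A$ and $B$ are full matrix algebras, one can instead run the boundary estimate through the noncommutative Khintchine inequality, which makes the computation more explicit and elementary; the general C$^*$-algebra statement, however, genuinely requires Haagerup's argument, and it is his version that produces the optimal constant $\|V\|$ appearing in the statement.
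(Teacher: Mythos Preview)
The paper does not prove this statement at all: Theorem~\ref{NCGT_Theorem} is quoted as a black box from the literature, with the attribution ``\cite{Pisier78, Haagerup85}'' and the remark that Haagerup's version gives the optimal constant. There is therefore no ``paper's own proof'' to compare against; the theorem functions purely as an input to the proof of Proposition~\ref{key prop}.

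That said, your outline is a fair high-level sketch of how the result is actually established. The reduction via AM--GM plus Hahn--Banach/minimax to the finite test inequality
\[
\Big|\sum_{k}V(x_k,y_k)\Big|\le\Big(\big\|\textstyle\sum_k x_k^*x_k\big\|+\big\|\textstyle\sum_k x_kx_k^*\big\|\Big)^{1/2}\Big(\big\|\textstyle\sum_k y_k^*y_k\big\|+\big\|\textstyle\sum_k y_ky_k^*\big\|\Big)^{1/2}
\]
is correct and standard, as is the passage to the biduals. Where your sketch becomes genuinely thin is the ``core'' paragraph: the phrase ``an analytic family of elements that interpolates between column and row data, then three-lines'' captures the flavour but not the substance of Haagerup's argument, which is considerably more delicate than a single interpolation and involves a careful inductive/approximation scheme to remove the approximability hypothesis that Pisier needed. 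Your final remark that in the matrix-algebra case one can substitute noncommutative Khintchine is reasonable, but note that this route typically yields a worse absolute constant than Haagerup's, whereas the paper explicitly invokes the optimal-constant version. In short: nothing you wrote is wrong, but the hard step is only gestured at, and the paper itself simply cites it rather than proving it.
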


We will apply the previous theorem for the particular case $A=\gM_{n_A}$ and $B=\gM_{n_B}$, where the action of the state $\rho$ on an element $x$ is given by $\tr(\rho x)$. However, we will keep the notation $\rho(x)$ for the simplicity of writing. For the proof of Proposition~\ref{key prop} we will borrow some ideas from the proof of~\cite[Theorem 1]{Blecher89}.

\begin{proof}[Proof of Proposition~\ref{key prop}]
Without loss of generality we can assume that $n_A\leq n_B$. 

Given an element $z\in \gM_{n_An_B}$, let us consider the associated bilinear form $V_z:\gM_{n_A}\times \gM_{n_B}\rightarrow \C$, according to Equation~\eqref{duality_Bil_tensor}, such that
\begin{align}\label{equivzV}
    \|z\|_{S_1^{n_A}\otimes_{\epsilon} S_1^{n_B}}=\|V_z:S_\infty^{n_A}\times S_\infty^{n_B}\rightarrow \C\|.
\end{align}
More precisely, if $z=\sum_{i}x_i\otimes y_i$, $V_z$ is defined  as $V_z(x,y)=\sum_i \tr(x_ix) \tr(y_i y)$ for every pair $(x,y)\in A\times B$.

It is clear that $$\|z\|_{S_1^{n_An_B}}=\max_{\|U\|_{S_\infty^{n_An_B}}=1}\tr(zU),$$where the standard $\sup$ has been replaced by a $\max$ because the spaces have finite dimension and one can assume the maximum to be attained at a unitary matrix $U$.

Let us write $z=\sum_{i,j=1}^{n_A} E_{i,j}\otimes z_{i,j}\in \gM_{n_A}(\gM_{n_B})$ and $U=\sum_{i,j=1}^{n_A} E_{i,j}\otimes U_{i,j}\in \gM_{n_A}(\gM_{n_B})$, where $E_{i,j}=|i\rangle\langle j|$ is the matrix whose $(i,j)$-th entry is $1$ and the remaining entries are all zero. Then, $$\|z\|_{S_1^{n_An_B}}=\tr(zU)=\sum_{i,j=1}^{n_A}\tr(z_{i,j}U_{j,i})=\sum_{i,j=1}^{n_A}V_z(E_{i,j}, U_{i,j}).$$  Hence, we have  
\begin{align*}
\|z\|_{S_1^{n_An_B}} &\leq \sum_{i,j=1}^{n_A}|V_z(E_{i,j}, U_{i,j})|\\
&\leq  \|V_z\|\sum_{i,j=1}^{n_A}(\varphi_1(E_{i,j}^*E_{i,j})+\varphi_2(E_{i,j}E_{i,j}^*))^{\frac{1}{2}}(\psi_1(U_{i,j}^*U_{i,j})+\psi_2(U_{i,j}U_{i,j}^*))^{\frac{1}{2}}\\
&\leq \|V_z\|\left(\sum_{i,j=1}^{n_A}\varphi_1(E_{i,j}^*E_{i,j})+\varphi_2(E_{i,j}E_{i,j}^*)\right)^{\frac{1}{2}}\left(\sum_{i,j=1}^{n_A}\psi_1(U_{i,j}^*U_{i,j})+\psi_2(U_{i,j}U_{i,j}^*)\right)^{\frac{1}{2}},
\end{align*}where we have used the triangle inequality in the first inequality, Theorem~\ref{NCGT_Theorem} in the second one  and the third inequality follows from Cauchy-Schwarz. 

Now, by the linearity of the $\varphi_i$'s and the $\psi_i$'s, the previous expression can be written as

\begin{align*}
&\|V_z\|\left(\varphi_1\left(\sum_{i,j=1}^{n_A}E_{i,j}^*E_{i,j}\right)+\varphi_2\left(\sum_{i,j=1}^{n_A}E_{i,j}E_{i,j}^*\right)\right)^{\frac{1}{2}}\left(\psi_1\left(\sum_{i,j=1}^{n_A}U_{i,j}^*U_{i,j}\right)+\psi_2\left(\sum_{i,j=1}^{n_A}U_{i,j}U_{i,j}^*\right)\right)^{\frac{1}{2}}\\&\leq \|V_z\|\left(\left\|\sum_{i,j=1}^{n_A}E_{i,j}^*E_{i,j}\right\|+\left\|\sum_{i,j=1}^{n_A}E_{i,j}E_{i,j}^*\right\|\right)^{\frac{1}{2}}\left(\left\|\sum_{i,j=1}^{n_A}U_{i,j}^*U_{i,j}\right\|+\left\|\sum_{i,j=1}^{n_A}U_{i,j}U_{i,j}^*\right\|\right)^{\frac{1}{2}}.
\end{align*}

It is easy to see that 
\begin{align}\label{estimate1}
\sum_{i,j=1}^{n_A}E_{i,j}^*E_{i,j}=\sum_{i,j=1}^{n_A}E_{i,j}E_{i,j}^*=n_A \uno_A
\end{align} and 
\begin{align}\label{estimate2}
\sum_{i,j=1}^{n_A}U_{i,j}^*U_{i,j}=\sum_{i,j=1}^{n_A}U_{i,j}U_{i,j}^*=n_A \uno_B.
\end{align}

Indeed, checking Equation~\eqref{estimate1} is straightforward and Equation~\eqref{estimate2} can be shown by writing $$\sum_{i,j=1}^{n_A}U_{i,j}U_{i,j}^*=(tr_A\otimes \uno_B)\left(\sum_{i,j,k=1}^{n_A}E_{i,k}\otimes U_{i,j}U_{k,j}^*\right)=(tr_A\otimes \uno_B)(UU^*)=(tr_A\otimes \uno_B)(\uno_{AB})=n_A\uno_B$$and analogously for $\sum_{i,j=1}^{n_A}U_{i,j}^*U_{i,j}$.

Since according to Equation~\eqref{equivzV} we have $\|V_z\|=\|z\|_{S_1^{n_A}\otimes_\epsilon S_1^{n_B}}$, the previous estimates lead to the upper bound
\begin{align*}
\|z\|_{S_1^{n_An_B}}\leq 2n_A \|z\|_{S_1^{n_A}\otimes_\epsilon S_1^{n_B}}
\end{align*} as we wanted.

The second part of the statement follows from the estimate $$\|z\|_{S_1^{n_A}\otimes_\epsilon S_1^{n_B}}\leq \sqrt{2}\|z\|_{S_1^{n_A,\mathrm{sa}}\otimes_\epsilon S_1^{n_B,\mathrm{sa}}},$$ proved in~\cite[Claim 4.7]{ReVi15}. Indeed, with this estimate at hand, one can write
$$\|z\|_{S_1^{n_An_B,\mathrm{sa}}}\leq \|z\|_{S_1^{n_An_B}}\leq 2n_A\|z\|_{S_1^{n_A}\otimes_\epsilon S_1^{n_B}}\leq 2\sqrt{2}n_A\|z\|_{S_1^{n_A,\mathrm{sa}}\otimes_\epsilon S_1^{n_B,\mathrm{sa}}}.$$
\end{proof}

\section{Emergent classicality in quantum channels }

Throughout this section, we show how to apply our result to give a tighter estimate on the emergence of the quantum-to-classical transition within the context of quantum Darwinism~\cite{Zurek2009, Brandao2015, Knott2018, Eugenia-Darwin, Ranard2020}. With this framework one intends to explore the appearance of the classical notion of objectivity from the quantum world. Consider a finite-dimensional quantum system $A$ that is subjected to a generic dynamics, modelled by a quantum channel $\Lambda$ with composite output system $B\coloneqq B_1\ldots B_n$. Mathematically, this can be modelled by a completely positive trace preserving linear map $\Lambda : \Herm{n_A} \to \Herm{n_B}$, where $n_A,n_B$ are the Hilbert space dimensions of $A$ and $B$, respectively. The $B_i$ systems represent the various \emph{observers} that are gaining information on $A$ through some complicated and partly uncontrollable interaction with it, e.g.\ by employing a measurement apparatus.

With this idea in mind, the archetypal example of a quantum channel for us is the broadcast of a measurement outcome: given a measurement $\left(e_i^A\right)_{i\in I}$ on $A$ and a collection of states $\left(\rho_i^{B}\right)_{i\in I}$ on $B= B_1\ldots B_n$, a measure-and-broadcast channel is given by
\begin{equation}
    \mathcal{E}_{A\to B}(x) \coloneqq \sum_i \tr (e_i^A x)\, \rho_i^B .
    \label{measure-and-broadcast}
\end{equation}
Quantitative statements concerning the phenomenon of quantum Darwinism dictate that if the number $n$ of observers is large, then there will a \emph{single} measurement on $A$ such that for \emph{most} observers the effective channel is well approximated by the broadcast of the outcome of that measurement (see e.g.~\cite[Theorem~1]{Brandao2015},~\cite[Theorem~4]{Eugenia-Darwin}, and~\cite[Theorem~1]{Ranard2020}). This is commonly referred to as \emph{objectivity of observables}.

Clearly, in order for the above statement to make sense, we need to introduce a measure of how close two quantum channels are. A suitable way to do so is via the notion of diamond norm distance. For a linear map $\Gamma: \Herm{n_A} \to \Herm{n_B}$, one starts by defining
\begin{equation}
    \left\|\Gamma\right\|_{1\to 1} \coloneqq \sup_{x\neq 0} \frac{\left\|\Gamma(x)\right\|_1}{\|x\|_1} ,
\end{equation}
where the supremum is over all selfadjoint $n_A\times n_A$ matrices $x$. This quantity is not particularly useful by itself because it is not submultiplicative; however, we can use it to construct a much better behaved object, the so-called diamond norm, given by~\cite{Aharonov1998}
\begin{equation}
    \left\|\Gamma\right\|_\diamond \coloneqq \sup_{m\in \N} \left\| \Gamma\otimes \id_m\right\|_{1\to 1} ,
\end{equation}
where $\id_m$ denotes the identity on the space $\Herm{m}$ of $m\times m$ Hermitian matrices. Among its many appealing features, the diamond norm distance $\left\|\Lambda_1 - \Lambda_2\right\|_\diamond$ between two quantum channels has an operational interpretation in terms of a channel discrimination task~\cite{Sacchi2005}.

We can now give the following improved version of~\cite[Theorem~1]{Ranard2020}:

\begin{theorem}\label{Thm-convergence}
Consider a quantum channel $\Lambda_{A\to B}$ with output system $B=B_1\ldots B_n$. For outputs subsets $R\subset \{B_1,\cdots, B_n\}$, let $\Lambda_{A \to R} \coloneqq \tr_{B\setminus \bar{R}}\circ \Lambda_{A\to B}$ denote the reduced channel onto $R$, obtained by tracing out the complement $\bar{R}$. Then, for any $q,r \in \{1,\cdots, n\}$ there exists a POVM $\left(M_\alpha^A\right)_\alpha$ on $A$ and an ``excluded'' output subset $Q\subset \{B_1,\cdots, B_n\}$ of size $|Q|=q$ with the following property: for all output subsets $R$ of size $|R|=r$ disjoint from $Q$ there exist states $\left(\sigma_\alpha^A\right)_\alpha$ such that the associated measure-and-broadcast channel $\mathcal{E}_{A\to R}(x) \coloneqq \sum_\alpha \tr(e_i^A x) \sigma_\alpha^R$ satisfies
\begin{align}\label{omega-factor}
\|\Lambda_{A\to R}-\mathcal{E}_{A \to R}\|_{\diamond}\leq d_A \Omega_{d_A,d_R} \sqrt{2\ln (d_A)\frac{|R|}{|Q|}},
\end{align}
where $d_A, d_R$ denote the Hilbert space dimensions of $A,R$, respectively, and
\begin{equation}
    \Omega_{d_A, d_R} \coloneqq \left\{ \begin{array}{ll} \min\left\{ 4, 2d_R-1 \right\} & \text{if $d_A=2$,} \\[1ex] \min\left\{ 2\sqrt2 d_A,\, 2d_R-1\right\} & \text{if $d_A\geq 3$.} \end{array} \right.
\end{equation}
\end{theorem}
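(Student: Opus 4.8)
The plan is to follow the argument of Qi and Ranard~\cite{Ranard2020} essentially line by line, to isolate the one step where the dimensional coefficient is generated, and to feed into it the sharp bound of Theorem~\ref{main hidden}. The key observation is that the coefficient $\min\{d_A^2,4d_A^{3/2},4d_R^{3/2},\sqrt{153\,d_Ad_R},2d_R-1\}$ of~\cite{Ranard2020} is nothing but the minimum of the then-available upper bounds on the data hiding ratio of the relevant family of measurements: the terms $4d_A^{3/2}$, $4d_R^{3/2}$ and $\sqrt{153\,d_Ad_R}$ are the old bounds on $\RR(\lo)$ from~\eqref{Previous upper bound LO}, the term $2d_R-1$ is the one-way bound~\eqref{optimal est loccone}, and $d_A^2$ is an elementary estimate. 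Our improvement consists in substituting $\RR(\lo)\leq 2\sqrt2\min\{d_A,d_R\}$ for the first three.

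First I would reduce the diamond-norm statement to a bipartite distinguishability problem via the Choi--Jamio\l{}kowski correspondence. Setting $\Gamma\coloneqq\Lambda_{A\to R}-\mathcal{E}_{A\to R}$ and passing to Choi matrices, which live on a bipartite space $A'\otimes R$ with $A'$ a reference of dimension $d_A$, the standard estimate $\|\Gamma\|_\diamond\leq d_A\,\|J(\Gamma)\|_1$ accounts for the prefactor $d_A$ in~\eqref{omega-factor} and reduces the problem to controlling $\|J(\Gamma)\|_1$. Here the Choi matrix of a measure-and-broadcast channel is separable across $A'|R$ with a classical-quantum structure, so that the quantity naturally controlled by the redundancy argument is a one-way measured norm $\|J(\Gamma)\|_{\loccone}$ rather than the full trace norm.

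Next I would import, unchanged, the information-theoretic redundancy argument of~\cite{Ranard2020}: the total correlation that the composite output can carry about $A$ is capped by $O(\ln d_A)$, so spreading it over the $q$ excluded observers and applying Pinsker's inequality after averaging over subsets $R$ disjoint from $Q$ produces the POVM $(M_\alpha^A)_\alpha$, the excluded set $Q$, the states $(\sigma_\alpha^R)_\alpha$, and the estimate $\|J(\Gamma)\|_{\loccone}\leq\sqrt{2\ln(d_A)\,|R|/|Q|}$, which is the remaining factor of~\eqref{omega-factor}. I would cite this step rather than reprove it, as it is insensitive to the coefficient we are improving.

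The only place where our new result enters is the passage from the measured norm back to the trace norm through the data hiding ratio, $\|J(\Gamma)\|_1=\|J(\Gamma)\|_{\all}\leq\RR(\loccone)\,\|J(\Gamma)\|_{\loccone}$. From $\lo\subseteq\loccone$ and Theorem~\ref{main hidden} one gets $\RR(\loccone)\leq\RR(\lo)\leq 2\sqrt2\min\{d_A,d_R\}\leq 2\sqrt2\,d_A$; together with $\RR(\loccone)\leq 2d_R-1$ from~\eqref{optimal est loccone} and the retained elementary bound $d_A^2$ this gives $\RR(\loccone)\leq\min\{d_A^2,\,2\sqrt2\,d_A,\,2d_R-1\}=\Omega_{d_A,d_R}$, the case split being simply that $d_A^2<2\sqrt2\,d_A$ exactly when $d_A=2$. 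Chaining the three estimates yields~\eqref{omega-factor}. The main difficulty I anticipate is bookkeeping rather than conceptual: one has to verify that the measurements implicit in the redundancy argument genuinely realize the one-way norm $\|\cdot\|_{\loccone}$ in the direction matching the broadcast structure (so that both $\RR(\loccone)\leq\RR(\lo)$ and $\RR(\loccone)\leq 2d_R-1$ apply), rather than only the smaller norm $\|\cdot\|_{\lo}$, and that no other step of~\cite{Ranard2020} relied on the specific numerical form of the old coefficient.
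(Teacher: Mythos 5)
Your proposal is correct and is essentially the paper's own proof: the paper likewise cites Qi--Ranard for the $d_A=2$ case and for the entire redundancy/Pinsker machinery, upgrading only the single norm-comparison step (Eq.~(30) of that paper) via the chain $\|L_{AB}\|_1\leq \RR(\lo)\,\|L_{AB}\|_{\lo}\leq 2\sqrt{2}\min\{d_A,d_R\}\,\|L_{AB}\|_{\mathrm{LOCC}_\leftarrow}$, and it arrives at the same case split at $d_A=2$ for exactly the reason you give ($d_A^2<2\sqrt{2}\,d_A$ iff $d_A=2$). The one bookkeeping point you flagged is resolved just as you anticipate: the one-way norm controlled by the redundancy argument is $\mathrm{LOCC}_\leftarrow$ (measurement on $R$ communicated to $A$), so the $2d_R-1$ term is the mirror image of~\eqref{optimal est loccone} with the roles of the parties exchanged, and $\lo\subseteq \mathrm{LOCC}_\leftarrow$ supplies the monotonicity needed in the chain.
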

\begin{proof}
When $d_A=2$ the statement is a reformulation of that of Ranard et al.~\cite[Theorem~1]{Ranard2020}. When $d_A\geq 3$, thanks to our Theorem~\ref{main hidden} we can improve the estimate in~\cite[Eq.~(30)]{Ranard2020} to
\begin{align*}
\|L_{AB}\|_1 &\leq \mathcal{R}(\lo)\, \|L_{AB}\|_\lo \leq \mathcal{R}(\lo)\, \|L_{AB}\|_{\mathrm{LOCC}_\leftarrow} \leq 2\sqrt 2 \min\left\{ d_A, d_B\right\} \|L_{AB}\|_{\mathrm{LOCC}_\leftarrow} .
\end{align*}
This is valid for all Hermitian operators on the bipartite quantum system $AB$. On the right-hand side, Ranard et al.\ have instead the dimensional factor
\begin{equation*}
\Omega_{d_A, d_B}^{\text{Ranard}} \coloneqq \min\left\{d_A^2,\, 4d_A^{3/2},\, 4d_B^{3/2},\, \sqrt{153\, d_Ad_B},\, 2d_B-1 \right\} .
\end{equation*}
For the case at hand ($d_A\geq 3$), it is not difficult to verify that
\begin{equation*}
\Omega_{d_A, d_B} \coloneqq \min\left\{ 2\sqrt2\min\{d_A,d_B\},\, \Omega_{d_A, d_B}^\text{Ranard}\right\} = \min\left\{ 2\sqrt2 d_A,\, 2d_B-1\right\} .
\end{equation*}
The rest of the proof is as in~\cite[Theorem~1]{Ranard2020}.
\end{proof}

The above result tells us that the objectivity of observables emerges before than predicted by Ranard et al., i.e., for a smaller number of observers. The improvement we have given is important for at least two reasons. First, it betters the scaling of the dimensional factor in $d_A$, which is perhaps the most important parameter here. To see why this is so, consider that $d_R$ is likely to be very large in applications, even when $r=1$, because the systems $B_i$ are typically macroscopic --- it is indeed not by chance that they are referred to as ``observers''. Second, it is in some sense the optimal dimensional factor that can be obtained with the proof techniques of~\cite{Brandao2015, Eugenia-Darwin, Ranard2020} --- in fact, it makes~\cite[Lemma~1]{Ranard2020} tight up to constants.

\section{Quantum XOR games}

The great relevance of classical XOR games in both computer science and quantum information motivated the authors in~\cite{ReVi15}  to introduce quantum XOR games. A bipartite quantum XOR game is described by means of a family of bipartite quantum states $(\rho_x)_{x=1}^N$ of local dimensions $n_A$ and $n_B$, a family of signs $c=(c_x)_{x=1}^N\in \{-1,1\}^N$ and a probability distribution $p=(p_x)_x$ on $\{1,\cdots, N\}$.  In order to understand the game, we can think of two players (spatially separated) and a referee. The game starts with the referee choosing one of the states $\rho_x$ according to the probability distribution $p$. Then, the referee sends half of the state to Alice and half of the state to Bob. After receiving the states, Alice and Bob must answer an output, $a=\pm 1$ in the case of Alice and $b=\pm 1$ in the case of Bob. Then, the players win the game if $ab=c_x$. There exists a very close connection between bipartite quantum XOR games and data hiding, as we will see below.

Obviously, the winning probability of the game will strongly depend on the form of the strategies. This form will be determined by the \emph{resources} allowed to Alice and Bob to play the game. In fact, when working with XOR games, it is very common to study the \emph{bias} of the game, $\beta (G)=\mathbf{P}_{win}(G)-1/2$ or, equivalently,
\begin{align*}
\mathbf{P}_{win}(G)-\mathbf{P}_{lose}(G),
\end{align*}rather than the $\mathbf{P}_{win}(G)$ itself.

If the players are allowed to perform any global quantum measurement, it is not difficult to see that the supremum of the bias of the game $G$ under these kinds of strategies can be written as
\begin{align*}
\beta_{\all}(G)=\sup\{\tr(X G): \, \|X\|_{B_{S_\infty^{n_An_B,\mathrm{sa}}}}\leq 1\}=\|G\|_{S_1^{n_An_B, sa}}, 
\end{align*}where \begin{align}\label{XOR quantum-operator}
G=\sum_{x=1}^Nc_xp_x \rho_x \in M_{n_An_B}^{\mathrm{sa}}.
\end{align} 

Another kind of strategies are those  where Alice and Bob must answer independently. These strategies are usually called \emph{product strategies}~\cite{ReVi15}. In this case the supremum of the bias of the game $G$ under these kinds of strategies is given~\cite[Definition 4.3]{ReVi15} by 
\begin{align*}
\beta(G)=\sup\{tr\big((A\otimes B)  G\big): \, \|A\|_{S_\infty^{n_A,\mathrm{sa}}}\leq 1, \|B\|_{S_\infty^{n_B,\mathrm{sa}}}\leq 1\}=\|G\|_{S_1^{n_A, sa}\otimes_\epsilon S_1^{n_B, sa}}.
\end{align*}

Hence, a direct application of Proposition~\ref{key prop} is the following upper bound. 
\begin{theorem}\label{thm: QXOR}
Given a bipartite quantum XOR game with local dimensions $n_A$ and $n_B$ and whose associated tensor is $G=\sum_{x=1}^Nc_xp_x \rho_x$, we have $$\beta_{\all}(G)\leq 2\sqrt{2}\min\{n_A,n_B\}\, \beta(G).$$
\end{theorem}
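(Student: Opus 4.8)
The plan is to recognize the statement as an immediate specialization of the selfadjoint half of Proposition~\ref{key prop}, applied to the game tensor $z=G$. The key observation is that the two biases appearing in the theorem are literally the two norms that the proposition compares. First I would recall the two identifications established just above the theorem: optimizing the bias over global measurement strategies and invoking the duality $S_\infty^{n_An_B,\mathrm{sa}}=(S_1^{n_An_B,\mathrm{sa}})^*$ yields $\beta_{\all}(G)=\|G\|_{S_1^{n_An_B,\mathrm{sa}}}$, while optimizing over product strategies $A\otimes B$ with $\|A\|_{S_\infty^{n_A,\mathrm{sa}}}\leq 1$ and $\|B\|_{S_\infty^{n_B,\mathrm{sa}}}\leq 1$ gives, by the very definition~\eqref{inj_norm} of the injective tensor norm, $\beta(G)=\|G\|_{S_1^{n_A,\mathrm{sa}}\otimes_\epsilon S_1^{n_B,\mathrm{sa}}}$.

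Next I would note that $G=\sum_{x=1}^N c_x p_x \rho_x$ is selfadjoint, so that $G\in\Herm{n_An_B}$ and, since the trace norm of a Hermitian matrix is the sum of the absolute values of its eigenvalues regardless of whether $G$ is viewed inside the real subspace of selfadjoint matrices or the full complex matrix space, we have $\|G\|_{S_1^{n_An_B,\mathrm{sa}}}=\|G\|_{S_1^{n_An_B}}$. This is the only auxiliary fact required, and it is elementary.

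With these identifications in place, I would simply apply the second inequality~\eqref{key_bound} of Proposition~\ref{key prop} to $z=G$, obtaining
$$\beta_{\all}(G)=\|G\|_{S_1^{n_An_B}}\leq 2\sqrt{2}\min\{n_A,n_B\}\,\|G\|_{S_1^{n_A,\mathrm{sa}}\otimes_\epsilon S_1^{n_B,\mathrm{sa}}}=2\sqrt{2}\min\{n_A,n_B\}\,\beta(G),$$
which is the claim. There is essentially no obstacle here: the entire content of the result has been front-loaded into Proposition~\ref{key prop}, and the theorem is a dictionary translation of that proposition into the language of quantum XOR games. The only point that deserves a sentence of care is the passage from the selfadjoint trace-norm space to the full matrix space in identifying $\beta_{\all}(G)$, which is immediate for Hermitian inputs.
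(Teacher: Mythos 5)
Your proposal is correct and matches the paper exactly: the paper also derives Theorem~\ref{thm: QXOR} as ``a direct application of Proposition~\ref{key prop}'', using the same identifications $\beta_{\all}(G)=\|G\|_{S_1^{n_An_B,\mathrm{sa}}}$ and $\beta(G)=\|G\|_{S_1^{n_A,\mathrm{sa}}\otimes_\epsilon S_1^{n_B,\mathrm{sa}}}$ established just before the theorem. Your additional remark that $\|G\|_{S_1^{n_An_B,\mathrm{sa}}}=\|G\|_{S_1^{n_An_B}}$ for Hermitian $G$ is a welcome point of care that the paper leaves implicit (it is exactly the first inequality in the closing chain of the proof of Proposition~\ref{key prop}).
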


It is worth mentioning that, in contrast to Theorem~\ref{main hidden} for the distinguishability problem, in this case Theorem~\ref{thm: QXOR} is equivalent to Proposition~\ref{key prop}. Indeed, while in Section~\ref{sec:Distinguishability} the $\epsilon$ norm was used to lower bound the norm $\|\cdot\|_{\lo}$ via Lemma~\ref{lemma norms}, in the case of quantum XOR game, the $\epsilon$ norm precisely describes the products bias $\beta(G)$. Again, the optimality of the estimate in Theorem~\ref{thm: QXOR} can be deduced from the existing results for LOCC.

As it already happened in the data hiding problem, Theorem~\ref{thm: QXOR} can be understood as the ultimate upper bound for general strategies in the sense that product strategies are the most basic ones. Hence, the upper bound provided in Theorem~\ref{thm: QXOR} applies to any kind of strategies.

\section*{Acknowledgements}

Willian Corr\^{e}a was supported by S\~{a}o Paulo Research Foundation (FAPESP), grants 2016/25574-8, 2018/03765-1 and 2019/09205-0. Ludovico Lami acknowledges financial support from the European Research Council under the Starting Grant GQCOP (Grant no.~637352), from the Foundational Questions Institute under the grant FQXi-RFP-IPW-1907, and from the Alexander von Humboldt Foundation. Carlos Palazuelos is partially supported by Spanish MINECO through Grant No.~MTM2017-88385-P, by the Comunidad de Madrid through grant QUITEMAD-CM P2018/TCS4342 and by SEV-2015-0554-16-3.

\end{document}